\documentclass[11pt]{article}
\usepackage[a4paper, vmargin=3cm,hmargin=3.2cm,nohead]{geometry}

\usepackage[utf8]{inputenc}
\usepackage[dvipsnames]{xcolor}

\usepackage{natbib}
\usepackage{times}
\usepackage{soul}
\usepackage{url}
\usepackage[hidelinks]{hyperref}
\usepackage[utf8]{inputenc}
\usepackage[small]{caption}
\usepackage{graphicx}
\usepackage{amsmath}
\usepackage{amsthm}
\usepackage{booktabs}
\usepackage{algorithm}
\usepackage{algorithmic}
\usepackage[switch]{lineno}

\usepackage{comment}
\usepackage{pgfplots}
\pgfplotsset{compat = newest}
\usepackage{amssymb}
\usepackage{enumitem}
\usepackage{tikz}
\usepackage{nicefrac}

\definecolor{mred}		{RGB}{204,0,0}
\definecolor{mgreen}	{RGB}{0, 153, 0}
\definecolor{mblue}		{RGB}{51, 51, 255}

\usepackage{todonotes}

\usepackage{xcolor}
\pgfplotsset{compat=1.17}
\usepackage{comment}
\usepackage{pifont}

\usepackage{wrapfig}

\newtheorem{theorem}{Theorem}

\newtheorem{definition}{Definition}
\newtheorem{proposition}{Proposition}

\usepackage{bm}

\newcommand{\iit}{\textit}
\newcommand{\mcal}{M}

\newcommand{\lbr}{\left(}
\newcommand{\rbr}{\right)}

\DeclareMathOperator{\pg}{PG}

\newcommand{\opt}{\mcal^*}

\newcommand{\pga}{\pg^{\alpha}}

\newenvironment{decisionproblem}[3]{
\par\smallskip\noindent
#1: #2#3
\par\smallskip
}{}

\usepackage{bm}

\title{Beyond Stability: Improved Efficiency Guarantees for $\alpha$-Stable Matchings}
\author{Isabel Fernandez Abad\thanks{Centrum Wiskunde \& Informatica (CWI), Amsterdam, The Netherlands} \and Sophie Klumper\footnotemark[1] \and Guido Sch\"afer\thanks{Centrum Wiskunde \& Informatica (CWI) and University of Amsterdam, The Netherlands}}

\date{}

\begin{document}

\maketitle

\begin{abstract}
Stable matching mechanisms are fundamental to market design but face an inherent tension between stability and social welfare optimality. We study a natural relaxation of stability, termed \emph{$\alpha$-stability}, which models agents as willing to deviate only when the potential improvement is sufficiently large. Under $\alpha$-stability, no pair of agents can deviate and improve their valuations by more than a factor of $1/\alpha$, with $\alpha \in (0,1]$.
We provide a complete characterization of the stability--efficiency tradeoff under asymmetric valuations. This tradeoff depends on the degree of asymmetry $\mu \in (0,1]$, which bounds the ratio between agents’ valuations for any pair. 
Our results show that relaxing stability can substantially improve achievable efficiency guarantees. 
We further present a polynomial-time algorithm that computes an $\alpha$-stable matching attaining the best possible efficiency guarantee.
For $\alpha \le \mu/(\mu+1)$, our algorithm achieves $1$-efficiency; for larger $\alpha$, it computes an $\alpha$-stable matching achieving at least $(1/\alpha)\cdot \mu/(\mu+1)$ of the optimal social welfare. Remarkably, our algorithm inflates the values of an optimal matching and then applies the Gale–Shapley algorithm to the modified instance.
Finally, we show that computing an optimal $\alpha$-stable matching is NP-hard, 
even under slight relaxations of stability, i.e., for $\alpha$ close to 1.
\end{abstract}

\section{Introduction}

Matching theory studies how to assign agents to each other based on their preferences in a way that satisfies certain desirable properties.
One of the most extensively studied problems in this field is the \emph{stable marriage problem} \citep{gale1962college,Gusfield1989Stable,Knuth1997,Roth1990Twosided}, which involves two sets of agents $L$ and $R$ of size $n \ge 1$ each, where each agent ranks all members of the opposite set in a strict order of preference.
A matching is said to be \emph{stable} if there is no pair of agents who would both prefer to be matched with each other over their current matches. 
Stable matchings have been a cornerstone of market design with wide-ranging applications in assigning students to schools, medical graduates to residency programs, employees to projects, or students to courses and seminars.

In this work, we study a more general variant of the classical stable matching problem, where agents have \emph{cardinal preferences} over the members of the opposite group, rather than ordinal ones. Such cardinal preferences do not only capture who is preferred over whom, but also quantify by how much. Additionally, we consider cardinal preferences that can encode \emph{indifferences} and \emph{incompatibilities} among the agents. 
Each agent's cardinal preferences are defined by numerical values (called \emph{valuations}) specifying how much the agent would like to be matched to another agent. 
More specifically, we use $v_{ij}$ to denote the (positive) valuation that an agent $i \in L$ assigns to $j \in R$, and $w_{ji}$ to denote the (positive) valuation that agent $j \in R$ assigns to $i \in L$. 
This variant of the stable matching problem is also known as the \emph{stable matching problem with cardinal preferences, ties allowed, and incomplete preference lists} (or, \emph{SMCTI} for short). 
Stable matchings are guaranteed to exist for SMCTI (see, e.g., \citep{Gusfield1989Stable}). 

We are interested in the tradeoff between stability and the social welfare efficiency. Given a matching $M$, the \emph{social welfare} refers to the total valuations of all matched pairs, i.e., $SW(M) = \sum_{(i,j) \in M} (v_{ij} + w_{ji})$. A stable matching $M$ is said to be \emph{$\gamma$-efficient} with $\gamma \in (0,1]$ if it recovers a $\gamma$-fraction of the optimal social welfare (i.e., the maximum social welfare over \emph{all} matchings). 
Ideally, we would like to ensure that a stable matching provides a strong efficiency guarantee. 
However, imposing stability inevitably leads to a significant loss in social welfare efficiency (see the example below).
As it turns out, this loss depends on the \emph{degree of asymmetry} in the agents’ valuations. 
We capture this by introducing an asymmetry parameter $\mu \in (0,1]$: the valuations are \emph{$\mu$-bounded} if $\nicefrac{v_{ij}}{w_{ji}} \in [\mu, \nicefrac{1}{\mu}]$ for every pair $(i,j)$ with $v_{ij} > 0$ and $w_{ji} > 0$. 
Note that the valuations are symmetric for $\mu = 1$, and become increasingly asymmetric as $\mu$ decreases.

\begin{figure}[t]
    \centering
    \begin{tikzpicture}[scale=2]
        \node[label=left:$i_1$] (a) at (0,1) {};
        \node[label=left:$i_2$] (b) at (0,0) {};
        \node[label=right:$j_1$] (c) at (1,1) {};
        \node[label=right:$j_2$] (d) at (1,0) {};
        
        \filldraw[black] (0,0) circle (1pt){};
        \filldraw[black] (1,0) circle (1pt){};
        \filldraw[black] (0,1) circle (1pt){};
        \filldraw[black] (1,1) circle (1pt){};

        \draw[thick] (a) -- (c) 
        node[pos=0.1, above] {\small $\frac{1}{2\mu}$} 
        node[pos=0.9, above] {\small $\frac{1}{2}$};
        \draw[thick] (b) -- (d) 
        node[pos=0.1, below] {\small $\frac{1}{2}$} 
        node[pos=0.9, below] {\small $\frac{1}{2\mu}$};
        \draw[thick] (b) -- (c) 
        node[pos=0.25, left] {\small $\frac{1+\epsilon}{2}$} 
        node[pos=0.75, right] {\small $\ \frac{1+\epsilon}{2}$};
    \end{tikzpicture}
    \caption{Example with $n=2$ agents and valuations $(v_{ij})$ and $(w_{ji})$ shown as left and right edge labels, respectively.} 
    \label{fig:SQM_mu}
\end{figure}

Consider the example in Figure~\ref{fig:SQM_mu} and note that the valuations are $\mu$-bounded. 
It is easy to verify that, for $\epsilon > 0$, the instance has 
a unique stable matching $M = \{(i_2, j_1)\}$. On the other hand, the social welfare optimal matching is $M^* = \{(i_1, j_1), (i_2, j_2)\}$. In terms of social welfare, we have $SW(M) = 1+\epsilon$ and $SW(M^*) = \nicefrac{(\mu+1)}{\mu}$.
Thus, the efficiency of the (unique) stable matching is $\nicefrac{(1+\epsilon)\mu}{(\mu+1)}$. 
As a consequence, for this instance we cannot guarantee that any stable matching recovers (strictly) more than a fraction of $\nicefrac{\mu}{(\mu+1)}$ of the optimal social welfare. 
In particular, the efficiency of a stable matching becomes unbounded as $\mu$ approaches $0$, or, said differently, as the valuations become increasingly asymmetric.
In fact, this tradeoff has been observed before by \citet{anshelevich2013anarchy} for the two limiting cases $\mu = 1$ and $\mu \rightarrow 0$; but, here, we get a smooth interpolation for any $\mu$ between these extreme points---which we show is tight also (see our contributions below).

A further drawback of the classical notion of stability is that it assumes agents will always deviate, even if the respective gain is vanishingly small.
This is an overly strong assumption in settings where agents may be reluctant to deviate.
As an example, consider the situation of deciding on money transfers between different accounts, where a commission discourages moves unless the gain justifies the expense. 
Another example are housing markets, where the decision to move to a better home must also account for the financial and personal costs of relocating.
In this context, an approach that has been pursued in the literature is to relax the stability notion by incorporating a fixed, additive \emph{switching cost} \citep{anshelevich2013anarchy}.
While an additive switching cost may be appropriate in some scenarios, in others a cost that scales proportionally with the valuations may be more realistic.

In this work, we build on a relaxed stability notion, termed \emph{$\alpha$-stability}, introduced by \citet{anshelevich2013anarchy}. 
The key idea is to model agents as willing to deviate only when the improvement is sufficiently large, with the required threshold being determined by a parameter $\alpha$.
Formally, a matching $\mcal$ is \emph{$\alpha$-stable} with $\alpha \in (0,1]$ if there is no pair $(i,j)$ such that $v_{i\mcal(i)}<\alpha v_{ij}$ and $w_{j\mcal(j)}<\alpha w_{ji}$, where $\mcal(i)$ and $\mcal(j)$ denote the matches of $i$ and $j$ under $\mcal$, respectively. 
As $\alpha$ decreases, agents require a larger improvement to switch, making the stability condition less restrictive.

As mentioned above, $\alpha$-stability was first introduced in \citep{anshelevich2013anarchy}, where they restrict attention to the case of symmetric valuations (i.e., $\mu = 1$).
To the best of our knowledge, we are the first to study the more complex case of asymmetric valuations that are $\mu$-bounded. 
The idea of studying relaxation of fundamental solution concepts has also been explored in adjacent fields, such as \emph{approximate Nash equilibria} by \citet{daskalakis2009note} and relaxed fairness notions such as \emph{envy-freeness up to one good (EF1)} by \citet{budish2011combinatorial}.

\begin{figure}[t]
    \centering
    \begin{tikzpicture}[scale=0.85]
      \begin{axis}[
        width=8cm,
        height=6cm,
        axis x line = bottom,
        axis y line = left,
        xtick = {0, 0.5, 1},
        ytick = {0, 0.5, 1},
        domain=0:1,
        samples=200,
        xlabel={$\alpha$-stability},
        ylabel={$\gamma$-efficiency},
        xmin=0, xmax=1.05,
        ymin=0, ymax=1.1,  
        clip=false,
      ]

        \draw node[left] at (0, 1/3) {$\frac{\mu}{\mu+1}$};
        \draw[dotted] (-0.01, 1/3) -- (1.01, 1/3);

        \draw node[below] at (1/3, 0) {$\frac{\mu}{\mu+1}$};        
        \draw[dotted] (1/3, -0.01) -- (1/3, 1);

        \addplot[mred, very thick, domain=0:1/3] {1};
        \node[below] at (0.7,0.75) {$\frac{1}{\alpha}\cdot \frac{\mu}{\mu+1}$};
        \addplot[mred, very thick, domain=0.332:1] {1/x * 1/3};

        \node[below] at (0.7, 0.22) {$\alpha \cdot \frac{\mu}{\mu+1}$};
        \addplot[mgreen, very thick, domain=0:1] 
          {x * 1/3};

      \end{axis}
    \end{tikzpicture}
    \caption{Tradeoff between $\alpha$-stability and $\gamma$-efficiency for $\mu = \frac{1}{2}$. Lower curve (green): lower bound on the efficiency of any $\alpha$-stable matching. Upper curve (red): upper bound on the best-possible efficiency guarantee of $\alpha$-stable matchings.}
    \label{fig:stmin-plot}
\end{figure}

\subsection{Our Contributions} 

We study how the relaxed notion of $\alpha$-stability influences social welfare efficiency under asymmetric ($\mu$-bounded) valuations, which we term the \emph{stability--efficiency tradeoff}.

Our main results are as follows:
\begin{enumerate}\itemsep10pt
    \item \iit{Characterization of Stability--Efficiency Tradoff:}
    We provide a complete characterization of the stability--efficiency tradeoff (Section 3); see Figure~\ref{fig:stmin-plot} for an illustration. 
    We prove that any $\alpha$-stable matching is at least $(\alpha \cdot \nicefrac{\mu}{(\mu+1)})$-efficient (Proposition \ref{prop:bounds2}). Thus, the worst-case efficiency of $\alpha$-stable matchings becomes worse as $\alpha$ decreases.
    On the other hand, we prove an upper bound of $\nicefrac{1}{\alpha}\cdot \nicefrac{\mu}{(\mu+1)}$ on the best-case efficiency of $\alpha$-stable matchings.     
    In particular, for smaller values of $\alpha$, $\alpha$-stable matchings can achieve a (significantly) better efficiency guarantee than stable matchings, with the magnitude of improvement depending on the ratio $\nicefrac{\mu}{(\mu+1)}$.
    Also, for $\alpha \le \nicefrac{\mu}{(\mu+1)}$, any optimal matching is $\alpha$-stable, implying that $\alpha$-stable matchings can be $1$-efficient. 
    At the same time, our result shows that it is impossible to surpass the bound of $\min(1, \nicefrac{1}{\alpha}\cdot \nicefrac{\mu}{(\mu+1)})$.     
    This provides a complete characterization of the stability--efficiency tradeoff for $\alpha$-stable matchings under $\mu$-bounded valuations; previously, only the case $\mu = 1$ was understood \citep{anshelevich2013anarchy}.

    \item \iit{Computing $\alpha$-Stable Matchings:} 
    We present a polynomial-time algorithm, termed \emph{$\alpha$-Boosting}, that computes an $\alpha$-stable matching achieving the best-possible worst-case efficiency guarantee (Section 4, Theorem \ref{th:boostingAlg}). 
    For $\alpha \leq \nicefrac{\mu}{(\mu + 1)}$, the algorithm simply returns a social welfare optimal matching, which is $\alpha$-stable and achieves a 1-approximation.     
    For $\alpha > \nicefrac{\mu}{(\mu + 1)}$, $\alpha$-Boosting uses a more sophisticated approach inspired by \citep{Colinibaldeschi2024Trust}:
    it first computes a social welfare optimal matching, inflates the values of its edges by a factor $\nicefrac{1}{\alpha}$, and then applies the Gale--Shapley algorithm to the modified instance. 
    Remarkably, we can show that this algorithm has an approximation guarantee of $\nicefrac{1}{\alpha} \cdot \nicefrac{\mu}{(\mu+1)}$, which is best possible. 
    
    \item \iit{Hardness of Optimal $\alpha$-Stable Matchings:}
    We consider the problem of computing an $\alpha$-stable matching of maximum social welfare (Section 5).
    We derive two integer linear programming formulations that characterize the set of $\alpha$-stable matchings. 
    For the stable matching problem with strict preference orders, \citet{VandeVate1989} and \citet{Rothblum1992} showed that the corresponding linear programming relaxations are integral. However, such a result seems elusive for $\alpha$-SMCTI. We prove that for any $\alpha \in (\nicefrac{n-1}{n}, 1)$, deciding whether an instance of $\alpha$-SMCTI admits an $\alpha$-stable matching achieving at least a given social welfare is NP-complete (Theorem \ref{thm:maxweight_NPhard}). 
    In particular, this implies that computing an optimal $\alpha$-stable matching is NP-hard for this range of $\alpha$.
    
\end{enumerate}

\subsection{Related Work} 

The stable matching problem was first introduced by \citet{gale1962college}, together with an algorithm that computes a stable matching which is optimal in terms of social welfare for the proposing side. Their model assumes that (i) preference lists are complete, and (ii) there are no ties in the preferences. \citet{Gale1985Remarks} considered the case of incomplete preferences and studied how changing the length on an agent's preference list affects their welfare in the final matching.

Later work has explored generalizations of the problem that drop both of the assumptions described above. \citet{Iwama1999Stable} showed that when both assumptions are removed, the problem of determining whether a (perfect) stable matching exists is NP-hard. When only (ii) is dropped, \citet{Manlove2002} proved that finding an optimal stable matching is NP-hard, and \citet{erdil2017two} showed that the matching resulting from the Gale--Shapley algorithm may not be optimal for either side. More recently, \citet{Haas2021} proposed heuristic methods to improve the social welfare of these matchings when ties occur.

A different way to generalize the stable marriage problem is to consider weighted preferences. \citet{anshelevich2013anarchy} do this and define the (multiplicative) notion of $\alpha$-stability, which we consider in this paper.  
They consider three scenarios: vertex-labeled graphs, symmetric edge-labeled graphs, and asymmetric edge-labeled graphs.\footnote{Our $\mu$ parameter interpolates between symmetric edge-labeled ($\mu=1$) and asymmetric edge-labeled graphs ($\mu\rightarrow 0$).} In the latter case, they showed that the social welfare of an $\alpha$-stable matching can be arbitrarily worse than that of an optimal matching, but did not explore this setting further. For symmetric edge-labeled graphs, however, they proved that there always exists an $\alpha$-stable matching with social welfare at least $\nicefrac{1}{2\alpha}$ times that of an optimal matching.\footnote{Note that in \citet{anshelevich2013anarchy}, the approximation ratios are presented inversely compared to our notation.} They also provided an algorithm that finds such a matching in $O(n^2)$ time.

$\alpha$-Stability was also analyzed in the context of fractional matchings by \citet{Caragiannis2021Stable}. They propose a polynomial time algorithm that computes an $\alpha$-stable fractional matching which guarantees an $\alpha$-approximation. They also prove that the problem of finding an optimal $\alpha$-stable fractional matching is hard to approximate. Different notions of stability have also been considered in the literature (e.g., \citep{chen2021matchings,echenique2025stable,Lin2024Stable,Pini2013Stability,Troyan2020Essentially}).

The linear programming approach to the stable matching problem has been widely studied. \citet{VandeVate1989} was the first to show that the polytope defined by the stable matching constraints has integral extreme points, thereby proving that an optimal stable matching can be found in polynomial time. \citet{Rothblum1992} later expanded on these results and offered simplified proofs. However, they do not easily generalize to the $\alpha$-stability setting. On the other hand, \citet{Manlove2002} considered a series of harder variants of the problem, which they showed to be NP-hard.

Finally, the stable marriage problem can be seen as a special case of the Generalized Assignment Problem for unit sizes and capacities. \citet{Colinibaldeschi2024Trust} study different variants of this problem when one group is strategic in a learning-augmented environment.
In particular, they introduce a mechanism, called \textit{Boost}, which increases the values of edges in the predicted matching such that these edges are more likely to be matched during a modified run of the Gale--Shapley algorithm. Our $\alpha$-Boosting algorithm builds upon this procedure.

\section{Preliminaries}

\paragraph{Stable Matching with Cardinal Preferences.~}
An instance of the \emph{stable matching problem with cardinal preferences} is defined as follows: We are given two sets $L$ and $R$, each consisting of $n$ agents.\footnote{It is w.l.o.g. that $|L|=|R|$, i.e., an instance with $|L| > |R|$ can be reduced to an instance with $|L|=|R'|$ by adding agents to $R$ and defining that $\forall j \in R' \setminus R$ it holds that $v_{ij} = w_{ji} = 0$, $\forall i \in L$.} 
We use $i$ and $j$ to refer to agents from the sets $L$ and $R$, respectively. 
Each agent has cardinal preferences (allowing for indifferences and incompatibilities) over the agents in the other set. 
More specifically, each agent $i \in L$ has a non-negative \emph{valuation $v_{ij}$} for every $j \in R$, and 
each agent $j \in R$ has a non-negative \emph{valuation $w_{ji}$} for every $i \in L$.
The valuations of an agent comprise their \iit{preference list}. 
The interpretation of the valuations $(v_{ij})$ for agent $i \in L$ is as follows: agent $i$ prefers to be matched to $j_1$ rather than $j_2$ if $v_{ij_1} > v_{ij_2}$, $i$ is indifferent between $j_1$ and $j_2$ if $v_{ij_1} = v_{ij_2}$, and $i$ does not want to be matched to $j$ if $v_{ij} = 0$. 
The interpretation of the valuations $(w_{ji})$ for an agent $j \in R$ follows analogously. 
We define $(i,j)$ as a \textit{pair} if $i \in L$ and $j \in R$.
We say that $(i,j)$ is a \iit{compatible} pair if $v_{ij}\neq 0$ and $w_{ji}\neq 0$. 
The \iit{preference graph} $\pg(L\cup R, (v_{ij}), (w_{ji}))$ (or simply $\pg$, when clear from context) is the bipartite graph that represents these relational preferences: the vertex set is $L\cup R$, the edge set consists of all compatible pairs, i.e.,
$
E(\pg) =\{(i,j)\in L\times R:v_{ij}\neq 0 \text{ and } w_{ji}\neq 0\},
$ 
and each edge $(i,j) \in E(\pg)$ is associated with two (positive) weights $v_{ij}$ and $w_{ji}$.
We assume w.l.o.g. that the preference graph $\pg$ is non-empty.

\paragraph{$\alpha$-Stable Matchings.~}

Given a preference graph $\pg$, a \iit{matching} $\mcal_{\pg}$ (or simply $\mcal$, when clear from context) for $\pg$ is a set of edges such that each vertex is incident to at most one edge. 
Unless specified otherwise, a matching refers to a non-empty matching. 
We will abuse notation and also denote by $\mcal$ the mapping $L\cup R \rightarrow L \cup R \cup \emptyset$ such that $\mcal(i)=j$ and $\mcal(j)=i$ whenever $(i,j)\in \mcal$. Otherwise, if there is no $j \in R$ such that $(i,j) \in \mcal$, we set $\mcal(i)= \emptyset$, in which case we will say that $i$ is \iit{lonely} under $\mcal$ and define $v_{i\mcal(i)}=0$. For $j \in R$, $\mcal(j)=\emptyset$ and $w_{j \mcal(j)}$ are defined similarly. 
We use $\text{Mat}(\pg)$ to refer to the set of matchings for $\pg$. A matching $\mcal$ is said to be \iit{(inclusion-wise) maximal} if there is no edge $e\in E(\pg) \setminus M$ such that $\mcal\cup \{e\}$ is also a matching.
A matching is said to be \iit{stable} if no pair strictly prefers each other to their current matches if any, or to being lonely.\footnote{This is sometimes referred to as \iit{weak stability}.} 
In this paper we consider a relaxed stability notion, called \emph{$\alpha$-stability} (see also \citep{anshelevich2013anarchy}).

\begin{definition}[$\alpha$-Stability]
Let $\mcal$ be a matching for a preference graph $\pg$, and let $\alpha\in (0,1]$.
A pair $(i,j) \in E(\pg)$ is \emph{$\alpha$-blocking} if $v_{i\mcal(i)} < \alpha v_{ij}$ and $w_{j\mcal(j)}< \alpha w_{ji}$. 
$M$ is \emph{$\alpha$-stable} if there is no $\alpha$-blocking pair.
\footnote{We impose $\alpha > 0$ as every matching is trivially $0$-stable.}
\end{definition}

Note that the original stability notion is equivalent to $1$-stability.
By definition, an $\alpha$-stable matching $\mcal$ is also $\alpha'$-stable for every $\alpha' \le \alpha$.
In particular, a stable matching is $\alpha$-stable for every $\alpha\in(0,1]$.
This observation also implies that, for any preference graph $\pg$, an $\alpha$-stable matching is guaranteed to exist for every $\alpha \in (0, 1]$ as a stable matching always exists (see \citep{Gusfield1989Stable}).
Observe that any $\alpha$-stable matching must be (inclusion-wise) maximal.

\paragraph{Benchmark.~}

We are interested in how the quality of $\alpha$-stable matchings changes depending on $\alpha$. 
To this end, we define the \iit{social welfare} of a matching $\mcal$ as 
$SW(\mcal)=\sum_{(i,j)\in \mcal}(v_{ij}+w_{ji})$.
That is, the social welfare $SW(\mcal)$ accounts for the sum of the valuations of all pairs in the matching $\mcal$. 
Given a preference graph $\pg$, an \emph{optimal matching} is a matching of maximum social welfare and will be denoted by $\mcal^*$, i.e., $\opt \in \text{Mat}(\pg)$ and $SW(\opt) \ge SW(\mcal)$ for every matching $\mcal \in \text{Mat}(\pg)$.
Throughout this work, we use the optimal social welfare as a benchmark when evaluating the quality of a matching.

\begin{definition}[$\gamma$-Efficiency]
Let $\mcal$ be a matching and $M^*$ be an optimal matching for a preference graph $\pg$. $M$ is said to be \emph{$\gamma$-efficient} with $\gamma \in (0,1]$ if $SW(\mcal)=\gamma SW(\opt)$. 
\end{definition}

\paragraph{$\mu$-Bounded Valuations.~}

As it turns out, the efficiency of $\alpha$-stable matchings does not only depend on the stability parameter $\alpha$, but also on the ``degree of asymmetry'' of the underlying valuations. The following notion formalizes this. 
Given a preference graph $\pg$, the valuations of the agents are \emph{$\mu$-bounded} with $\mu \in (0,1]$ if
$\mu w_{ji}\leq v_{ij}\leq w_{ji}/\mu$ for all $(i,j)\in E(\pg)$.
Intuitively, for $\mu = 1$ the valuations are symmetric (i.e., $v_{ij} = w_{ji}$ for all $(i,j) \in E(\pg)$),
and they become increasingly asymmetric as $\mu$ decreases. 

\paragraph{Computing $\alpha$-Stable Matchings.~}

We refer to our stable matching problem introduced above as \emph{$\alpha$-SMCTI} for short, which stands for \emph{$\alpha$-Stable Matching with Cardinal preferences, Ties allowed, and Incomplete preference lists}.
An instance of $\alpha$-SMCTI is given by a preference graph $\pg$ and the parameter $\alpha$. Given the preference graph $\pg$, we let $\mu$ refer to the (largest) $\mu$ parameter for which the valuations are $\mu$-bounded (as defined above). 
We are interested in deriving algorithms that compute $\alpha$-stable matchings with good efficiency guarantees. 
In this context, a \emph{$\rho$-approximation algorithm} with $\rho \in (0,1]$ is an algorithm that, for any given instance of $\alpha$-SMCTI, computes in polynomial time an $\alpha$-stable matching $\mcal$ for $\pg$ that is at least $\rho$-efficient, i.e., $SW(\mcal) \ge \rho \cdot SW(\mcal^*)$; $\rho$ is also called the \emph{approximation guarantee}. 
We also consider the problem of computing an \emph{optimal $\alpha$-stable matching}, i.e., an $\alpha$-stable matching of maximum social welfare among all $\alpha$-stable matchings. 

\section{Stability--Efficiency Tradeoff} 

Given an instance of $\alpha$-SMCTI, our ultimate goal is to compute an $\alpha$-stable matching with a good (large) efficiency.
We begin by analyzing the worst-case and best-case efficiency guarantees achievable by $\alpha$-stable matchings.

\begin{proposition}\label{prop:bounds1}
Consider an instance of $\alpha$-SMCTI with preference graph $\pg$ and $\mu$-bounded valuations. Then
\begin{enumerate}[label={$(\mathrm{\alph*})$}]
    \item if $\alpha\in(0,\frac{\mu}{\mu+1}]$, there is always a 1-efficient $\alpha$-stable matching, and
    \item if $\alpha\in (\frac{\mu}{\mu+1}, 1]$, then for every $\delta>0$ there exists a preference graph $\pg$ such that there is no $(\frac{1}{\alpha}\cdot \frac{\mu}{\mu+1}+\delta)$-efficient $\alpha$-stable matching.
\end{enumerate}
\end{proposition}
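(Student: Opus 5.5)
For part (a), I will show that every optimal matching $\opt$ is $\alpha$-stable whenever $\alpha\le\frac{\mu}{\mu+1}$. Suppose, for contradiction, that $(i,j)\in E(\pg)$ is $\alpha$-blocking for $\opt$, so $v_{i\opt(i)}<\alpha v_{ij}$ and $w_{j\opt(j)}<\alpha w_{ji}$. Consider the matching $\mcal'$ obtained from $\opt$ by removing the edges $(i,\opt(i))$ and $(\opt(j),j)$ (whichever exist) and adding $(i,j)$.

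The welfare loss from the removed edges is at most
\[
v_{i\opt(i)}+w_{\opt(i)i}+v_{\opt(j)j}+w_{j\opt(j)}\le \Big(1+\tfrac1\mu\Big)\big(v_{i\opt(i)}+w_{j\opt(j)}\big),
\]
using $\mu$-boundedness on each removed edge: $w_{\opt(i)i}\le v_{i\opt(i)}/\mu$ and $v_{\opt(j)j}\le w_{j\opt(j)}/\mu$. If an endpoint is lonely, the corresponding term is $0$, which only helps. Combining with the blocking inequalities, this is strictly less than
\[
\frac{\mu+1}{\mu}\,\alpha\,(v_{ij}+w_{ji})\le v_{ij}+w_{ji}.
\]
Hence $SW(\mcal')>SW(\opt)$, contradicting optimality. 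The one degenerate case is $\opt(i)=j$, which cannot occur since $(i,j)$ would then not be blocking.

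For part (b), I will generalize the example of Figure~\ref{fig:SQM_mu}. Keep $v_{i_1j_1}=\frac{1}{2\mu}$, $w_{j_1i_1}=\frac12$, $v_{i_2j_2}=\frac12$ and $w_{j_2i_2}=\frac1{2\mu}$. On the middle edge $(i_2,j_1)$, however, set both valuations to $\frac{1+\epsilon}{2\alpha}$ instead of $\frac{1+\epsilon}{2}$. These valuations are symmetric, so the instance stays $\mu$-bounded.

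\begin{itemize}
\item First check that $(i_2,j_1)$ is $\alpha$-blocking for $\opt=\{(i_1,j_1),(i_2,j_2)\}$. We have $v_{i_2j_2}=\frac12<\alpha\cdot\frac{1+\epsilon}{2\alpha}$, and $w_{j_1i_1}=\frac12<\frac{1+\epsilon}{2}$.
\item The only maximal matchings are $\opt$ and $\{(i_2,j_1)\}$, and every $\alpha$-stable matching is maximal. So every $\alpha$-stable matching equals $\{(i_2,j_1)\}$, which is indeed $\alpha$-stable because $i_1$ and $j_2$ have no other compatible partner.
\item Its welfare is $\frac{1+\epsilon}{\alpha}$.
\end{itemize}

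It remains to check that $\opt$ is actually optimal, i.e. $\frac{\mu+1}{\mu}\ge \frac{1+\epsilon}{\alpha}$. This holds for small $\epsilon$ because $\alpha>\frac{\mu}{\mu+1}$. If $\frac{\mu+1}{\mu}$ were smaller, the middle matching itself would be optimal. The efficiency is therefore
\[
\frac{1+\epsilon}{\alpha}\cdot\frac{\mu}{\mu+1},
\]
which is below $\frac1\alpha\cdot\frac{\mu}{\mu+1}+\delta$ once $\epsilon<\alpha\delta\frac{\mu+1}{\mu}$.

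The main care point is this comparison in part (b): choosing the boosted middle-edge weight so that it is $\alpha$-blocking while keeping $\opt$ optimal. Both constraints hold simultaneously exactly in the regime $\alpha>\frac{\mu}{\mu+1}$, which explains the threshold.
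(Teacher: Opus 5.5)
Your proof is correct and follows the paper's: part (a) is the same exchange argument showing that every optimal matching is $\alpha$-stable, and in part (b) your boosted middle edge $\frac{1+\epsilon}{2\alpha}$ is exactly the paper's instance from Figure~\ref{fig:SQM_mu} with middle value $\frac{1}{2}(\frac{1}{\alpha}+\epsilon')$, reparametrized by $\epsilon'=\epsilon/\alpha$, under the same two constraints ($(i_2,j_1)$ is blocking, and $\opt$ remains optimal). One small slip: $\{(i_2,j_1)\}$ is $\alpha$-stable because $i_2$ and $j_1$ are each matched to their strictly preferred partner, not because $i_1$ and $j_2$ lack other compatible partners---though this fact is not needed for the non-existence claim anyway.
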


\begin{proof}
Let $\alpha\in(0,\frac{\mu}{\mu+1}]$. We will show that any optimal matching is $\alpha$-stable. 
Towards a contradiction, let $\opt$ be an optimal matching that is not $\alpha$-stable for some $\alpha\in (0, \frac{\mu}{\mu+1}]$. 
Then there exists an $\alpha$-blocking pair $(i,j)$, i.e., $v_{i\opt(i)}<\alpha v_{ij}$ and $w_{j\opt(j)}<\alpha w_{ji}$.
As the valuations are $\mu$-bounded we obtain
\[ v_{i\opt(i)}+w_{\opt(i)i} \leq \tfrac{\mu + 1}{\mu} v_{i\opt(i)}<\alpha\cdot\tfrac{\mu+1}{\mu}v_{ij} \]
and
\[ v_{\opt(j)j}+w_{j\opt(j)}\leq \tfrac{\mu + 1}{\mu} w_{j\opt(j)}<\alpha\cdot\tfrac{\mu+1}{\mu}w_{ji}.\]
As $\alpha\cdot\frac{\mu+1}{\mu}\leq 1$, adding the above equations leads to
\begin{equation} \label{eq:comparison-values-1}
v_{i\opt(i)}+w_{\opt(i)i}+ v_{\opt(j)j}+w_{j\opt(j)} <
v_{ij}+w_{ji}.
\end{equation}
The matching 
$\hat{\mcal}:=\opt \setminus \{(i,\opt(i)),(\opt(j),j)\} \cup \{(i,j)\}$
therefore satisfies $SW(\hat{\mcal})>SW(\opt)$, contradicting optimality of $\opt$. Observe that the argument also holds if either $i$ or $j$ is lonely under $\opt$, as \eqref{eq:comparison-values-1} remains true, and if $i$ and $j$ are both lonely under $\opt$ then this immediately contradicts optimality of $\opt$.

Now let $\alpha\in (\frac{\mu}{\mu+1},1]$. 
For each value of $\delta>0$, we will construct a preference graph such that there is no $(\frac{1}{\alpha}\cdot\frac{\mu}{\mu+1}+\delta)$-efficient $\alpha$-stable matching. Consider the instance shown in Figure \ref{fig:SQM_mu}, where $\epsilon=\frac{1}{\alpha}-1+\epsilon'$ for some  $0<\epsilon'<(1+\frac{1}{\mu})\delta$ such that $\epsilon\in (\frac{1}{\alpha}-1,\frac{1}{\mu})$.\footnote{This interval is well defined as $\frac{1}{\alpha}-1<\frac{1}{\mu}$ since $\alpha>\frac{\mu}{\mu+1}$.} 
The only $\alpha$-stable matching is given by $\mcal=\{(i_2,j_1)\}$, while the optimal matching is $\opt=\{(i_1,j_1), (i_2,j_2)\}$, leading to\footnote{Note that $SW(\opt) > 0$, as $E(\pg) \neq \emptyset$.}
\[ 
\frac{SW(\mcal)}{SW(\opt)}=\frac{1+\epsilon}{1 + \frac{1}{\mu}}=\frac{1}{\alpha}\cdot \frac{\mu}{\mu+1}+\frac{\epsilon'}{1+\frac{1}{\mu}}<\frac{1}{\alpha}\cdot \frac{\mu}{\mu+1}+\delta.\]
\end{proof}

\begin{proposition}\label{prop:bounds2}
Given an instance of $\alpha$-SMCTI with preference graph $\pg$ and $\mu$-bounded valuations, any $\alpha$-stable matching is at least 
$(\alpha\cdot \frac{\mu}{\mu+1})$-efficient.
\end{proposition}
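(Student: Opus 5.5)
The plan is a charging argument that compares an arbitrary $\alpha$-stable matching $\mcal$ with an optimal matching $\opt$ edge by edge. Fix an edge $(i,j)\in\opt$. Since $\mcal$ is $\alpha$-stable, $(i,j)$ is not $\alpha$-blocking. Hence either $v_{i\mcal(i)}\ge \alpha v_{ij}$ or $w_{j\mcal(j)}\ge \alpha w_{ji}$. We charge the edge $(i,j)$ to $i$ in the first case and to $j$ in the second, choosing arbitrarily if both hold.

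The next step is to convert the one-sided inequality into a bound on the full value $v_{ij}+w_{ji}$, using $\mu$-boundedness. If $(i,j)$ is charged to $i$, then $v_{ij}+w_{ji}\le v_{ij}+v_{ij}/\mu = \frac{\mu+1}{\mu}v_{ij}\le \frac{1}{\alpha}\cdot\frac{\mu+1}{\mu}\, v_{i\mcal(i)}$. Symmetrically, if it is charged to $j$, then $v_{ij}+w_{ji}\le \frac{1}{\alpha}\cdot\frac{\mu+1}{\mu}\, w_{j\mcal(j)}$. In either case we need $v_{i\mcal(i)}>0$ (resp. $w_{j\mcal(j)}>0$). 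This holds because the charged agent cannot be lonely: otherwise its valuation is $0<\alpha v_{ij}$, so the inequality would fail.

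Summing over all edges of $\opt$ gives
\[
SW(\opt)\le \frac{1}{\alpha}\cdot\frac{\mu+1}{\mu}\Bigl(\sum_{i\text{ charged}} v_{i\mcal(i)}+\sum_{j\text{ charged}} w_{j\mcal(j)}\Bigr).
\]
Each agent is an endpoint of at most one edge of $\opt$, so each agent is charged at most once. Each agent's term appears in $SW(\mcal)$ exactly once, as $v_{i\mcal(i)}$ or $w_{j\mcal(j)}$, and all terms are nonnegative. Therefore the bracket is at most $\sum_{i\in L}v_{i\mcal(i)}+\sum_{j\in R}w_{j\mcal(j)}=SW(\mcal)$. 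This yields $SW(\mcal)\ge \alpha\cdot\frac{\mu}{\mu+1}\,SW(\opt)$.

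The only delicate point is this bookkeeping. The proof must not charge both endpoints of an $\mcal$-edge more than their total value, and the argument above handles that because charges go to agents, not to edges. I therefore expect no real obstacle beyond stating the charging scheme carefully and noting that $SW(\opt)>0$.
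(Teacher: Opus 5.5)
Your proposal is correct and is essentially the paper's proof: the paper likewise splits $\opt$ into the edges where $v_{i\mcal(i)}\geq\alpha v_{ij}$ holds and those where $w_{j\mcal(j)}\geq\alpha w_{ji}$ holds, uses $\mu$-boundedness to bound each edge's value by $\frac{1}{\alpha}(1+\frac{1}{\mu})$ times the corresponding endpoint's valuation in $\mcal$, and sums. Your agent-level charging only makes explicit the bookkeeping the paper leaves implicit, namely that each agent is charged at most once and lonely agents contribute zero.
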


\begin{proof}
Let $\mcal$ be an $\alpha$-stable matching, and fix an optimal matching $\opt$. Due to $\alpha$-stability, for every $(i,j)\in \opt$, it holds that $v_{i\mcal(i)}\geq \alpha v_{ij}$ or $w_{j\mcal(j)}\geq \alpha w_{ji}$. Thus we can write $\opt= E_1\cup E_2$, where $E_1 := \{(i,j)\in \opt \: : \: v_{i\mcal(i)}\geq \alpha v_{ij}\}$ and $E_2 := \{(i,j)\in \opt \: : \: w_{j\mcal(j)}\geq \alpha w_{ji}\}.$
In particular,
\begin{equation}\label{eq:sum_e1e2alpha}
SW(\opt)\leq \sum_{(i,j)\in E_1}(v_{ij}+w_{ji})+\sum_{(i,j)\in E_2}(v_{ij}+w_{ji}).
\end{equation}
Consider the first summation in \eqref{eq:sum_e1e2alpha}. 
It holds that $w_{ji} \leq \frac{1}{\mu} v_{ij} \le \frac{1}{\alpha \mu}v_{i\mcal(i)}$, as the valuations are $\mu$-bounded and by definition of $E_1$. 
Therefore,
\begin{align}\label{eq:sum_e1alpha}
\sum_{(i,j)\in E_1}( v_{ij}+w_{ji})& 
\leq \sum_{(i,j)\in \mcal}\tfrac{1}{\alpha}\lbr 1 + \tfrac{1}{\mu} \rbr v_{ij}.
\end{align}
We can upper bound the second summation in \eqref{eq:sum_e1e2alpha} analogously, and therefore upper bound \eqref{eq:sum_e1e2alpha} by $SW(\opt)\leq \tfrac{1}{\alpha}( 1+\tfrac{1}{\mu}) SW(\mcal)$, proving the claim. 
\end{proof}

Note that when $\alpha = 1$, the bounds in Propositions \ref{prop:bounds1} and \ref{prop:bounds2} coincide. 
Also note that by relaxing the stability requirement to $\alpha$-stability, the lower bound on the efficiency guarantee decreases whereas the upper bound on the efficiency guarantee increases. We refer to Figure~\ref{fig:stmin-plot} for an illustration of these bounds.

\section{Computing $\alpha$-Stable Matchings} \label{sec:computing-alpha-stable-matchings}

Our aim for this section is to design an algorithm which, given any instance of $\alpha$-SMCTI with $\mu$-bounded valuations, achieves an approximation guarantee of 1 whenever $\alpha\leq \nicefrac{\mu}{\mu+1}$, and of $\nicefrac{1}{\alpha}\cdot \nicefrac{\mu}{\mu+1}$ otherwise. Note that in light of Proposition \ref{prop:bounds2}, this is best possible. We start by introducing our algorithm, called \emph{$\alpha$-Boosting}, which builds upon \citep{gale1962college} and \citep{Colinibaldeschi2024Trust}.

\paragraph{$\alpha$-Boosting Algorithm.} Given a preference graph $\pg$ and a parameter $\alpha \in (0,1]$ as input, $\alpha$-Boosting first computes an optimal matching $\opt$ for $\pg$. If $\opt$ is $\alpha$-stable, then $\opt$ is returned. Otherwise, $\alpha$-Boosting defines an instance $\pga$ with the same underlying bipartite graph as $\pg$ and valuations $(v'_{ij}, w'_{ji})$ with 
\[
v'_{ij}=\begin{cases} 
      \frac{1}{\alpha} v_{ij}, & \text{if }(i,j)\in \opt, \\
      v_{ij}, & \text{otherwise,}
   \end{cases} 
\]
and $w'_{ji}$ defined analogously.
It then returns the matching $\mcal$ that is obtained by running the version of the Gale-Shapley algorithm that handles incomplete preferences (see \citep{Gusfield1989Stable}) on $\pga$, with ties broken arbitrarily.\footnote{The Gale–Shapley algorithm operates on ordinal preferences, but any set of cardinal preferences can be uniquely converted into this form.}
\medskip

Denote $m = |E(\pg)|$. 
Since an optimal matching for $\pg$ can be computed in time $O(m^{1+o(1)})$ \citep{Chen:2025}, 
$\pga$ can be constructed in time $O(m)$, and the Gale-Shapley algorithm on $\pga$ runs in time $O(m \log n)$, the complexity of $\alpha$-Boosting is $O(m \log n)$. 
In the special case of $\mu = 1$, an algorithm with runtime of $O(n^2)$ has already been proposed by \citet{anshelevich2013anarchy}. However, it is not trivial whether this algorithm can be generalized for $\mu \in (0,1)$. 

Note that $\mcal$ is a matching for $\pg$ if and only if it is a matching for $\pga$. Throughout this section, we will specify the preference graph ($\pg$ or $\pga$) in the subscript when relevant, and omit it otherwise. The main result of this section is the following theorem:

\begin{theorem} \label{th:boostingAlg}
    Let $\alpha \in (0,1]$. Then, for any instance of $\alpha$-SMCTI, $\alpha$-Boosting returns an $\alpha$-stable matching and achieves an approximation guarantee of $f(\alpha, \mu)$ with
    \begin{equation} \label{eq:guarantee-alpha-boosting}
     f(\alpha, \mu) = \begin{cases} 
      1, & \text{if } \alpha \le \frac{\mu}{\mu+1}, \\
      \frac{1}{\alpha}\cdot \frac{\mu}{\mu+1}, & \text{otherwise,}
      \end{cases}
    \end{equation}
    which is best possible. 
\end{theorem}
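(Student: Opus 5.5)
We argue by cases on $\alpha$, following the two branches of $\alpha$-Boosting. If $\alpha \le \frac{\mu}{\mu+1}$, Proposition~\ref{prop:bounds1}(a) shows that every optimal matching is $\alpha$-stable. Hence the algorithm returns $\opt$ itself, which is $1$-efficient. If $\alpha > \frac{\mu}{\mu+1}$ and $\opt$ happens to be $\alpha$-stable, it is returned, and its efficiency $1$ exceeds $f(\alpha,\mu)$. So the only substantive case is $\alpha>\frac{\mu}{\mu+1}$ with output $\mcal$ equal to the Gale--Shapley matching on $\pga$. We use only that $\mcal$ is (weakly) stable in $\pga$: for every edge $(i,j)$, $v'_{i\mcal(i)} \ge v'_{ij}$ or $w'_{j\mcal(j)} \ge w'_{ji}$, with the convention that lonely agents have value $0$. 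Optimality of the bound in all cases then follows from Proposition~\ref{prop:bounds1}(b).

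\emph{$\alpha$-stability in $\pg$.} Suppose $(i,j)$ is $\alpha$-blocking for $\mcal$ in $\pg$, i.e.\ $v_{i\mcal(i)} < \alpha v_{ij}$ and $w_{j\mcal(j)} < \alpha w_{ji}$. Boosting multiplies values by at most $\frac1\alpha$, so $v'_{i\mcal(i)} \le \frac1\alpha v_{i\mcal(i)} < v_{ij} \le v'_{ij}$. The same argument gives $w'_{j\mcal(j)} < w'_{ji}$. Thus $(i,j)$ strictly blocks $\mcal$ in $\pga$, contradicting stability of the Gale--Shapley output. This covers both cases $(i,j)\in\opt$ and $(i,j)\notin\opt$, as well as lonely agents, since then the left-hand side is $0$.

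\emph{Efficiency.} Fix $(i,j)\in\opt\setminus\mcal$. By stability in $\pga$, either $v'_{i\mcal(i)} \ge v'_{ij} = \frac1\alpha v_{ij}$ or the analogous inequality holds for $j$. In the first case $i$ is not lonely, because $v_{ij}>0$. Moreover, the edge $(i,\mcal(i))$ is not in $\opt$, since $i$'s $\opt$-edge is $(i,j)\neq(i,\mcal(i))$, so it is unboosted. Hence $v_{i\mcal(i)} \ge \frac1\alpha v_{ij}$. This is the key improvement over Proposition~\ref{prop:bounds2}: the factor $\frac1\alpha$ appears on the favourable side.

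We split $\opt\setminus\mcal$ into $E_1$ (the first case holds) and $E_2$ (the second), as in Proposition~\ref{prop:bounds2}. By $\mu$-boundedness, each $(i,j)\in E_1$ satisfies $v_{ij}+w_{ji} \le (1+\frac1\mu)v_{ij} \le \alpha(1+\frac1\mu) v_{i\mcal(i)}$, and symmetrically for $E_2$. Each edge of $\mcal\setminus\opt$ is charged at most once through its left endpoint and at most once through its right endpoint, so
\[
SW(\opt) \le SW(\opt\cap\mcal) + \alpha\tfrac{\mu+1}{\mu}\, SW(\mcal\setminus\opt) \le \alpha\tfrac{\mu+1}{\mu}\, SW(\mcal),
\]
where the last step uses $\alpha\frac{\mu+1}{\mu} > 1$. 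This gives efficiency at least $\frac1\alpha\cdot\frac{\mu}{\mu+1}$.

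The main point of care is the efficiency charging. We must verify that the $\mcal$-edges used for charging are unboosted, so that their $\pga$ values coincide with their $\pg$ values. We must also check that the weak-stability guarantee of Gale--Shapley with arbitrarily broken ties is exactly what we used.
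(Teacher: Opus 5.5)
Your proposal is correct and follows the paper's proof. It obtains $\alpha$-stability from the weak stability of the Gale--Shapley output in $\pga$, together with the fact that boosting raises values by at most a factor $\frac{1}{\alpha}$. It obtains efficiency by charging each edge of $\opt\setminus\mcal$ to an adjacent unboosted edge of $\mcal$, with each such edge charged at most once per endpoint. The only difference is bookkeeping: you charge $v$ through the left endpoint and $w$ through the right (as in Proposition~\ref{prop:bounds2}), which avoids the paper's separate ratio bounds for singly- and doubly-hit edges of the mapping $g$. You also explicitly handle the case where $\opt$ is returned for $\alpha>\frac{\mu}{\mu+1}$, which the paper passes over.
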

\begin{proof}
Consider an instance $\pg$ of $\alpha$-SMCTI. 
If $\alpha\leq \frac{\mu}{\mu+1}$, then $\alpha$-Boosting returns $\opt$ of $\pg$ which is $\alpha$-stable (Proposition \ref{prop:bounds1}), thus achieving an approximation guarantee of 1 in this case. 
Therefore, assume that $\alpha>\frac{\mu}{\mu+1}$ and let $\mcal$ be the matching returned by $\alpha$-Boosting.
We start by showing that $\mcal$ is $\alpha$-stable. Let $(i,j)$ be a pair which is not in $\mcal$. We distinguish two cases:
\begin{enumerate}
    \item If $i$ did not propose to $j$ during the execution of the Gale-Shapley algorithm, then we must have that $v'_{i\mcal(i)}\geq v'_{ij}$. By definition, it holds that $\frac{1}{\alpha}v_{i\mcal(i)}\geq v'_{i\mcal(i)}\geq v'_{ij}\geq v_{ij},$ so in particular, $v_{i\mcal(i)}\geq \alpha v_{ij}$. Therefore, $(i,j)$ cannot be $\alpha$-blocking.
    \item Otherwise, $i$ proposed to $j$ but $j$ rejected $i$, so it must hold that $w'_{j\mcal(j)}\geq w'_{ji}$. Again by definition, it holds that $\frac{1}{\alpha}w_{j\mcal(j)}\geq w'_{j\mcal(j)}\geq w'_{ji}\geq w_{ji},$ so $w_{j\mcal(i)}\geq \alpha w_{ji}$ and $(i,j)$ cannot be $\alpha$-blocking.
\end{enumerate}

Next, we establish the approximation guarantee by fixing an optimal matching $\opt$ for $\pg$ and constructing a mapping $g: \opt \to \mcal$. The main idea is that $g$ maps each edge in the optimal matching either to itself or to an adjacent edge of comparable value.

Let $(i,j) \in \opt$. If $(i,j) \in \mcal$, we set $g(i,j) = (i,j)$. Otherwise, $v'_{i\mcal(i)}\geq v'_{ij}$ or $w'_{j\mcal(j)}\geq w'_{ji}$. If the first condition holds, we set $g(i,j)=(i,\mcal(i))$. Since $(i,\mcal(i))\notin\mcal^*$, it follows that $v_{i\mcal(i)}=v'_{i\mcal(i)}\geq v'_{ij}=\frac{1}{\alpha}v_{ij}$, and combining with the definition of $\mu$-bounded valuations leads to
\[
\frac{v_{i\mcal(i)}+w_{\mcal(i)i}}{v_{ij}+w_{ji}} 
    \geq \frac{(1+\mu)v_{i\mcal(i)}}{( 1+\frac{1}{\mu} ) v_{ij}} 
    \geq \frac{\frac{1}{\alpha}(1+\mu)v_{ij}}{( 1+\frac{1}{\mu} ) v_{ij}} 
    = \frac{\mu}{\alpha}.
\]
If \textit{only} the second condition holds, we set $g(i,j)=(\mcal(j), j)$, which yields an analogous bound.
Now, if $(i,j) \in \mcal$ and $(i,j) \notin \opt$, it could be that two edges $(i,j') \in \opt$ and $(i',j) \in \opt$ are mapped to $(i,j)$. 
In this case we have that $\alpha v_{ij}\geq v_{ij'}$ and $\alpha w_{ji}\geq w_{ji'}$, so
\begin{align*}
    \frac{v_{ij}+w_{ji}}{v_{ij'}+w_{j'i}+v_{i'j}+w_{ji'}} &
    \geq \frac{\frac{1}{\alpha}v_{ij'}+\frac{1}{\alpha}w_{ji'}}{v_{ij'}+w_{j'i}+v_{i'j}+w_{ji'}} 
    \geq \frac{\frac{1}{\alpha}v_{ij'}+\frac{1}{\alpha}w_{ji'}}{( 1+\frac{1}{\mu}) ( v_{ij'}+w_{ji'})} 
    = \frac{1}{\alpha}\cdot \frac{\mu}{\mu+1}.
\end{align*}

Since $1>\frac{1}{\alpha}\cdot\frac{\mu}{\mu+1}$, $\frac{\mu}{\alpha}>\frac{1}{\alpha}\cdot\frac{\mu}{\mu+1}$ and edges in $\opt$ are mapped to adjacent edges in $M$ or itself (if in $M$), we obtain
\begin{align*}
\tfrac{1}{\alpha}\cdot \tfrac{\mu}{\mu+1}SW(\opt) 
&= \tfrac{1}{\alpha}\cdot\tfrac{\mu}{\mu+1}\sum_{(i,j)\in\opt}(v_{ij}+w_{ji})  \le \sum_{(i,j)\in\mcal}(v_{ij}+w_{ji}) = SW(\mcal),
\end{align*}
proving that $\alpha$-Boosting is $f(\alpha, \mu)$-approximate as in \eqref{eq:guarantee-alpha-boosting}. 
Furthermore, this is best possible by Proposition \ref{prop:bounds1}.
\end{proof}

\section{Hardness of Optimal $\alpha$-Stable Matchings} \label{sec:hardness}

As shown in the previous section, $\alpha$-Boosting always returns an $\alpha$-stable matching that recovers at least $\nicefrac{1}{\alpha}\cdot \nicefrac{\mu}{(\mu+1)}$ of the optimal social welfare. 
However, it need not be an optimal \emph{$\alpha$-stable} matching. 
In this section, we consider the problem of finding an optimal $\alpha$-stable matching.

\paragraph{Integer Linear Program of $\alpha$-SMCTI.} 

Given a matching $\mcal$ for a preference graph $\pg(L\cup R, (v_{ij}), (w_{ji}))$, we define its \iit{assignment matrix} as $x(\mcal)=(x_{ij})_{i\in L, j\in R}$, where $x_{ij}=1$ if the pair $(i,j)\in \mcal$, and $x_{ij}=0$ otherwise. 
The set of all matchings can be described by the following system of inequalities:
\begin{align}
    \textstyle\sum_{j\in R}x_{ij}&\leq 1 \qquad\;\;\; \forall i\in L \label{eq:constraint_one} \\
    \textstyle\sum_{i\in L}x_{ij}&\leq 1 \qquad\;\;\; \forall j\in R \label{eq:constraint_two}\\
    x_{ij}& \in \{0, 1\} \quad \forall (i,j)\in E(\pg).\label{eq:constraint_three}
\end{align}

Below, we show that the $\alpha$-stability property can be described by two different linear constraints.

\begin{proposition}\label{prop:LP_alpha_stab2}
    Let $\mcal$ be a matching for $\pg(L\cup R, (v_{ij}), (w_{ji}))$ and let $\alpha \in (0,1]$. Then, $\mcal$ is $\alpha$-stable if and only if  $x(\mcal)$ satisfies    \begin{equation}\label{eq:LP_alpha_stab2}
        \sum_{l:v_{il}\geq \alpha v_{ij}}x_{il}+\sum_{k:w_{jk}\geq \alpha w_{ji}}x_{kj}\geq 1 
        \quad \forall (i,j)\in E(\pg).
    \end{equation}
\end{proposition}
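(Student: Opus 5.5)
We prove both directions by translating the constraint into a statement about the partners of $i$ and $j$. Since $\mcal$ is a matching, $x(\mcal)$ satisfies \eqref{eq:constraint_one} and \eqref{eq:constraint_two}. Hence each of the two sums in \eqref{eq:LP_alpha_stab2} lies in $\{0,1\}$. Fix a pair $(i,j)\in E(\pg)$. The first sum equals $1$ exactly when $i$ is matched under $\mcal$, say to $l=\mcal(i)$, and $v_{i\mcal(i)}\geq \alpha v_{ij}$. Otherwise it equals $0$. The second sum equals $1$ exactly when $j$ is matched and $w_{j\mcal(j)}\geq \alpha w_{ji}$.

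The one subtlety is lonely agents. If $i$ is lonely, the first sum is $0$ by definition. In that case $v_{i\mcal(i)}=0<\alpha v_{ij}$, because $v_{ij}>0$ for edges of $\pg$ and $\alpha>0$. So the equivalence ``first sum $=1$ iff $v_{i\mcal(i)}\geq \alpha v_{ij}$'' also holds for lonely $i$. The same argument applies to lonely $j$, using $w_{ji}>0$.

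With this reformulation, constraint \eqref{eq:LP_alpha_stab2} for the pair $(i,j)$ holds iff at least one sum equals $1$. That is, it holds iff $v_{i\mcal(i)}\geq \alpha v_{ij}$ or $w_{j\mcal(j)}\geq \alpha w_{ji}$. This is exactly the negation of $(i,j)$ being $\alpha$-blocking.

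Quantifying over all $(i,j)\in E(\pg)$ then gives the result. The constraints hold for every edge iff no edge is $\alpha$-blocking, i.e., iff $\mcal$ is $\alpha$-stable. Both directions follow simultaneously from this chain of equivalences.

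The only real obstacle is the careful handling of lonely agents and of the index ranges in the sums. The sums should be read as ranging over edges of $\pg$, or over all partners with $x$ defined as $0$ off the edge set; either reading gives the same value. A minor point to note is that $l=j$ itself is included in the first sum, since $v_{ij}\ge\alpha v_{ij}$. Thus if $(i,j)\in\mcal$, the constraint holds, consistent with a matched pair never blocking.
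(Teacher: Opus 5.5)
Your proof is correct and takes essentially the same approach as the paper. Both rest on the observation that, for a matching, the constraint for $(i,j)$ fails exactly when both sums vanish, i.e., exactly when $(i,j)$ is $\alpha$-blocking. You present this as a single chain of equivalences per edge rather than two directions by contradiction and contrapositive, and you treat lonely agents (via $v_{ij},w_{ji}>0$ on edges) and the index ranges of the sums more explicitly than the paper does.
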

\begin{proof}
Towards a contradiction, suppose that $M$ is an $\alpha$-stable matching and there exists $(i,j) \in E(\pg)$ such that \eqref{eq:LP_alpha_stab2} does not hold. 
In this case it must be that both summations in \eqref{eq:LP_alpha_stab2} are equal to 0, and so $v_{i\mcal(i)}<\alpha v_{ij}$ and $w_{j\mcal(j)}<\alpha w_{ji}$. This implies that $(i,j)$ is an $\alpha$-blocking pair which contradicts that $M$ is $\alpha$-stable. 
For the other direction note that if \eqref{eq:LP_alpha_stab2} is satisfied for a matching $M$, this implies that there is no $\alpha$-blocking pair $(i,j)\in E(\pg)$ as $(i,j)$ would otherwise not satisfy \eqref{eq:LP_alpha_stab2}, implying that $M$ is $\alpha$-stable.
\end{proof}

\begin{proposition}\label{prop:LP_alpha_stab1}
    Let $\mcal$ be a matching for $\pg(L\cup R, (v_{ij}), (w_{ji}))$ and let $\alpha\in(0,1]$. Then, $\mcal$ is $\alpha$-stable if and only if $x(\mcal)$ satisfies
    \begin{equation}\label{eq:LP_alpha_stab1}
        \sum_{k:\alpha w_{ji}>w_{jk}}x_{kj}-\sum_{l:v_{il}\geq\alpha v_{ij}}x_{il} \leq 0 
        \quad \forall (i,j)\in E(\pg).
    \end{equation}
\end{proposition}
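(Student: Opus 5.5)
Since $\mcal$ is a matching, each of the two sums in \eqref{eq:LP_alpha_stab1} is $0$ or $1$: the first is $1$ exactly when $j$ is matched to some $k$ with $w_{jk}<\alpha w_{ji}$, and the second is $1$ exactly when $i$ is matched to some $l$ with $v_{il}\ge \alpha v_{ij}$. The plan is to translate the constraint for each $(i,j)\in E(\pg)$ into a statement about $\mcal(i)$ and $\mcal(j)$, and then compare it with the definition of an $\alpha$-blocking pair, as in the proof of Proposition~\ref{prop:LP_alpha_stab2}.

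Write $A_{ij}$ for the first sum and $B_{ij}$ for the second. The constraint fails exactly when $A_{ij}=1$ and $B_{ij}=0$. That is, $j$ is matched to a partner it values less than $\alpha w_{ji}$, and $i$ is either lonely or matched to a partner it values less than $\alpha v_{ij}$. Since $v_{i\mcal(i)}=0<\alpha v_{ij}$ when $i$ is lonely, $B_{ij}=0$ is equivalent to $v_{i\mcal(i)}<\alpha v_{ij}$.

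The subtle step is the asymmetry at $j$. $A_{ij}=1$ requires $j$ to be \emph{matched} to a worse-than-$\alpha$ partner, whereas the blocking condition $w_{j\mcal(j)}<\alpha w_{ji}$ also holds when $j$ is lonely. So the constraint alone does not forbid a blocking pair in which $j$ is lonely. I handle this by also using maximality and symmetry.

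For the direction "$\alpha$-stable $\Rightarrow$ \eqref{eq:LP_alpha_stab1}": if the constraint fails at $(i,j)$, then $v_{i\mcal(i)}<\alpha v_{ij}$ and $w_{j\mcal(j)}<\alpha w_{ji}$. Hence $(i,j)$ is $\alpha$-blocking, a contradiction.

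For the converse, suppose $(i,j)$ is $\alpha$-blocking. If $j$ is matched, say to $k$, then $x_{kj}=1$ with $w_{jk}<\alpha w_{ji}$, so $A_{ij}=1$, and $B_{ij}=0$; the constraint is violated. If $j$ is lonely, consider $i$. If $i$ is also lonely, $\mcal$ is not maximal. Here I would argue directly that the constraint system should be read together with maximality, since any $\alpha$-stable matching is maximal. Otherwise $i$ is matched to some $l$ with $v_{il}<\alpha v_{ij}$. The constraint for $(i,j)$ does not directly catch this case, so I would look for another constraint among \eqref{eq:LP_alpha_stab1} that is violated, for instance the one indexed by $(l,\cdot)$ or by the pair $(i,j)$ from the other side.

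This lonely-$j$ case is the main obstacle. I expect it may show that the statement implicitly needs either the matching constraints to force $j$ matched, or a symmetric companion inequality. My first attempt would be to check whether the intended reading treats the lonely case as $x_{\emptyset j}$ with value $0<\alpha w_{ji}$, i.e.\ including the "empty partner" in the first sum. With that convention the equivalence is immediate from the case analysis above. If the convention is not available, I would produce a small counterexample or add the maximality constraint $\sum_l x_{il}+\sum_k x_{kj}\ge 1$.
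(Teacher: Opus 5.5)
Your reduction to two $0/1$ sums, your forward direction, and your converse for matched $j$ are correct and coincide with the paper's argument. The case you single out, an $\alpha$-blocking pair $(i,j)$ with $j$ lonely, is not a gap you failed to close; it is a defect of the statement itself. If $j$ is lonely, the first sum in \eqref{eq:LP_alpha_stab1} is $0$, so the constraint for $(i,j)$ holds automatically, and the converse is false as written. The paper's proof passes over exactly this point. It infers from $w_{j\mcal(j)}<\alpha w_{ji}$ that $\mcal(j)\in\{k:\alpha w_{ji}>w_{jk}\}$, and hence that the first sum equals $1$, which tacitly assumes $j$ is matched. A minimal counterexample: $n=2$, all valuations equal to $1$, and $\mcal=\{(i_1,j_1)\}$. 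Since $\alpha\le 1$, every set $\{k:\alpha w_{ji}>w_{jk}\}$ is empty, so every matching satisfies \eqref{eq:LP_alpha_stab1}, yet $(i_2,j_2)$ is $\alpha$-blocking.

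Of your proposed repairs, two fail: searching for another violated constraint, and adding the maximality constraint $\sum_l x_{il}+\sum_k x_{kj}\ge 1$. Take $L=\{i_1,i_2\}$, $R=\{j_1,j_2\}$ with only the edges $(i_1,j_1)$ and $(i_1,j_2)$. Set $v_{i_1j_1}=1$, $v_{i_1j_2}=2$, $w_{j_1i_1}=w_{j_2i_1}=1$, $\alpha\in(\tfrac12,1]$, and $\mcal=\{(i_1,j_1)\}$. This matching is maximal and $(i_1,j_2)$ is $\alpha$-blocking. But for both edges the only $k$ with $w_{jk}<\alpha w_{ji}$ is $i_2$, and $x_{i_2j}=0$, so all of \eqref{eq:LP_alpha_stab1} holds.

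The mirrored inequality (roles of $L$ and $R$ swapped) catches this example but not the first one. The original inequality, its mirror and maximality together do characterize $\alpha$-stability, but no single addition suffices.

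The clean repair is your empty-partner convention. Adding $1-\sum_k x_{kj}$ to the first sum turns the constraint into $1-\sum_{k:w_{jk}\ge\alpha w_{ji}}x_{kj}-\sum_{l:v_{il}\ge\alpha v_{ij}}x_{il}\le 0$. This is exactly \eqref{eq:LP_alpha_stab2}, so the corrected claim follows from Proposition~\ref{prop:LP_alpha_stab2}. Alternatively, restrict the statement to matchings saturating $R$; there the paper's argument is valid verbatim. This restriction is harmless for complete lists, where every $\alpha$-stable matching is perfect. You should commit to one of these corrections and state the counterexample, rather than leaving the lonely-$j$ case open.
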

\begin{proof}
Let $\mcal$ be an $\alpha$-stable matching, and suppose that for a certain $(i,j)\in E(\pg)$ we have that $\sum_{k:\alpha w_{ji}>w_{jk}}x_{kj}=1$. This means that $j$ is paired with an agent they find less desirable than $i$ by a factor of $\alpha$. Since $\mcal$ is $\alpha$-stable, we must have $v_{i\mcal(i)}\geq \alpha v_{ij}$. In particular, $\sum_{l:v_{il}\geq\alpha v_{ij}}x_{il}=1$. This settles one direction. For the other direction, we proceed by contrapositive. Suppose that $\mcal$ is not $\alpha$-stable, and let $(i,j)$ be an $\alpha$-blocking pair, so $v_{i\mcal(i)}<\alpha v_{ij}$ and $w_{j\mcal(j)}<\alpha w_{ji}$. The first implies that $\mcal(i)\notin \{l:v_{il}\geq \alpha v_{ij}\}$, and therefore $\sum_{l:v_{il}\geq\alpha v_{ij}}x_{il}=0$. The second implies that $\mcal(j)\in \{k:\alpha w_{ji}>w_{jk}\}$, so $\sum_{k:\alpha w_{ji}>w_{jk}}x_{kj}=1$ and \eqref{eq:LP_alpha_stab1} does not hold for $(i,j)$ which concludes the proof.
\end{proof}

Using Propositions~\ref{prop:LP_alpha_stab2} and \ref{prop:LP_alpha_stab1}, we can describe the set of all $\alpha$-stable matchings via constraints 
(\ref{eq:constraint_one})--(\ref{eq:constraint_three}), together with either (\ref{eq:LP_alpha_stab2}) or (\ref{eq:LP_alpha_stab1}). 
Thus, computing an $\alpha$-stable matching is equivalent to solving the integer program that maximizes the objective $\sum_{(i,j) \in E(\pg)} (v_{ij}+ w_{ji}) x_{ij}$ subject to these constraints. The crux, however, is that integer programs are not known to be solvable in polynomial time. 

Notably, for the special case of stable matchings (i.e., $\alpha = 1$) with strict preference orders, \citet{Rothblum1992} and \citet{VandeVate1989} showed that the extreme points of the polytope defined by \eqref{eq:constraint_one}, \eqref{eq:constraint_two} and 
\begin{equation}
    x_{ij} \ge 0 \quad \forall (i,j)\in E(\pg),\label{eq:relaxation}
\end{equation}
together with 
(\ref{eq:LP_alpha_stab2}) or (\ref{eq:LP_alpha_stab1}), are integral.\footnote{In these papers, preference lists are given as order relations, but the generalization to cardinal preferences is almost immediate.} 
Consequently, in this case, solving the corresponding linear program (which can be done in polynomial time) yields an integral optimal solution. That is, for $\alpha = 1$ we can find an optimal $\alpha$-stable matching by solving the respective linear program. 
This naturally raises the question of whether the same holds for $\alpha < 1$. We answer this question in the negative by showing that the problem is NP-hard, even when $\alpha$ is arbitrarily close to 1.

To prove our hardness results, we consider certain special cases of our $\alpha$-SMCTI problem. 
We denote these more restrictive variants by omitting the respective letters from the acronym.
In particular, SMCT (SMT) refers to the stable matching variant with cardinal (ordinal) preferences, ties allowed, and complete preference lists. 
Given a weak order relation $\succeq_i$ for $i \in L$, we write $j_1 \succeq_i j_2$ to indicate that $i$ (weakly) prefers $j_1$ over $j_2$; the relation $\succeq_j$ for $j \in R$ is defined analogously. 

Given an instance $I$ of SMT and $(i,j)\in L\times R$, we define the \iit{cost} $c_{ij}$ of agent $i$ for $j$ as the position $j$ occupies in $i$'s preference list (the cost $c_{ji}$ of agent $j$ for $i$ is defined analogously). Agents in the same tie group share the same cost, which will be equal to the number of agents ranked strictly above them plus one. We define the \iit{weight} of $\mcal$ as $w(\mcal):=\sum_{u\in L\cup R}c_{u\mcal(u)}$, where $c_{u\mcal(u)}=\infty$ whenever $u$ is lonely under $\mcal$. 
We say that a matching is \iit{egalitarian} if its weight is minimum among all stable matchings of $I$.

Let {\textsc{Egalitarian-SMT-OPT}} denote the optimization problem of finding an egalitarian stable matching given an instance of SMT. 
\citet{Manlove2002} proved that \textsc{Egalitarian-SMT-OPT} is hard to approximate. In particular, this means that the following decision problem is also NP-hard:
\begin{decisionproblem}
    {\textsc{Min-Cost-SMT}}
    {Given an instance of SMT and $K\in \mathbb{R}^+$,}
    { is there a stable matching $\mcal$ such that $w(\mcal)\leq K$?}
\end{decisionproblem}

The decision problem we are interested in here is: 

\begin{decisionproblem}
    {\textsc{Max-Weight}-$\alpha$-SMCT}
    {Given an instance of $\alpha$-SMCT and $\omega\in \mathbb{R}^+$,}
    { is there an $\alpha$-stable matching $\mcal$ such that $SW(\mcal) \geq \omega$?}
\end{decisionproblem}

\begin{theorem}\label{thm:maxweight_NPhard}
Given $\alpha\in (\frac{n-1}{n},1)$, \textsc{Max-Weight-$\alpha$-SMCT} is NP-complete.
\end{theorem}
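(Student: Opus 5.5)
Membership in NP is immediate: given a candidate matching $\mcal$, we can check in polynomial time that it is a matching, that $SW(\mcal)\ge\omega$, and (by Proposition~\ref{prop:LP_alpha_stab2}) that none of the $O(n^2)$ pairs is $\alpha$-blocking. For hardness we will reduce from \textsc{Min-Cost-SMT}, which is NP-hard by the result of \citet{Manlove2002}. The plan is to convert ordinal ranks into cardinal valuations. The valuations must be spread out enough that $\alpha$-stability coincides with ordinary (weak) stability. They must also be arranged so that maximizing social welfare corresponds to minimizing the total cost $w(\mcal)$.

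Given an SMT instance with costs $c_{ij},c_{ji}\in\{1,\dots,n\}$, we define $v_{ij}:=n+1-c_{ij}$ and $w_{ji}:=n+1-c_{ji}$. These lie in $\{1,\dots,n\}$, so all lists remain complete, and tied agents receive equal values. Consider two distinct values $a<b$ in $\{1,\dots,n\}$. Then $a/b\le (b-1)/b\le (n-1)/n<\alpha$, hence $a<\alpha b$. Conversely, if $a\ge b$ then $a\ge \alpha b$. So for matched agents, ``$v_{i\mcal(i)}<\alpha v_{ij}$'' holds exactly when $i$ strictly prefers $j$ to $\mcal(i)$. Since $\alpha$-stable matchings are maximal and lists are complete, every stable or $\alpha$-stable matching is perfect. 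A lonely agent has value $0<\alpha v_{ij}$, which again matches the ordinal notion. Hence the $\alpha$-stable matchings of the constructed instance are exactly the weakly stable matchings of the SMT instance. This equivalence is the key step and is exactly where the hypothesis $\alpha>\frac{n-1}{n}$ is used. I expect it to be the main point requiring care, in particular the handling of ties and lonely agents, and ensuring that $\alpha<1$ does not matter.

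Next we relate the objectives. For a perfect matching $\mcal$,
\[
SW(\mcal)=\sum_{u\in L\cup R}(n+1-c_{u\mcal(u)})=2n(n+1)-w(\mcal).
\]
Stable matchings in SMT with complete lists are perfect, so their weights are finite. We set $\omega:=2n(n+1)-K$, handling the edge case $\omega\le 0$ by outputting a trivial yes-instance, since a stable matching always exists. Then some stable matching has $w(\mcal)\le K$ if and only if some $\alpha$-stable matching has $SW(\mcal)\ge\omega$.

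The construction is polynomial in size, so \textsc{Max-Weight-$\alpha$-SMCT} is NP-hard, and hence NP-complete, for every $\alpha\in(\frac{n-1}{n},1)$. Here $n$ is the number of agents per side in the reduced instance, which is the same $n$ as in the source instance.
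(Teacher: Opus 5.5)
Your proposal is correct and follows essentially the same route as the paper. It uses the same reduction from \textsc{Min-Cost-SMT} with $v_{ij}=n+1-c_{ij}$, $w_{ji}=n+1-c_{ji}$, the same bound $\frac{b-1}{b}\le\frac{n-1}{n}<\alpha$ to show that $\alpha$-blocking and blocking pairs coincide, and the identity $SW(\mcal)=2n(n+1)-w(\mcal)$ for perfect matchings with $\omega=2n(n+1)-K$. You handle lonely agents and the case $\omega\le 0$ explicitly, which is slightly more careful than the paper's restriction to $K\in(1,2n(n+1))$.
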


\begin{proof}
It can easily be verified that \textsc{Max-Weight}-$\alpha$-SMCT is in NP. 
Let $\alpha\in \lbr\frac{n-1}{n}, 1\rbr$. 
We define a polynomial-time reduction from SMT to $\alpha$-SMCT.

Given an instance $I$ of SMT, let $P_i = \{j_1^i, \ldots, j_n^i\}$ denote the preference list of $i$, which potentially includes ties. 
Let $\pg$ be an instance of $\alpha$-SMCT with the same underlying graph as $I$. For each $(i, j)\in L\times R$, we define the cardinal preferences as $v_{ij} = -c_{ij}+n+1$ and $w_{ji} = -c_{ji}+n+1$. Note that as the costs are in the range $[1,n]$, so are the valuations by construction.
    
It trivially holds that $\mcal$ is a matching for $\pg$ if and only if $\mcal$ is a matching for $I$.
We will prove that a matching $\mcal$ is $\alpha$-stable for $\pg$ if and only if $\mcal$ is stable for $I$ by showing that a pair $(i,j)$ is $\alpha$-blocking for $\pg$ if and only if it is blocking for $I$. 
Note that one direction holds by construction: take $(i,j)$ to be $\alpha$-blocking for $\pg$, i.e. $v_{i\mcal(i)} < \alpha v_{ij} < v_{ij}$ and $w_{j\mcal(j)} < \alpha w_{ji} < w_{ji}$. This can only be the case if $c_{ij} < c_{i\mcal(i)}$ and $c_{ji} < c_{j\mcal(j)}$, so $(i,j)$ is blocking for $I$. 
For the other direction, suppose that $(i,j)$ is blocking for $I$, i.e. $c_{ij}<c_{i\mcal(i)}$ and $c_{ji}<c_{j\mcal(j)}$. This implies that $\mcal(i)$ and $j$ belong to different tie groups in $i$'s preference list, and $i$ and $\mcal(j)$ belong to different tie groups in $j$'s. 
We want to show that $v_{i\mcal(i)}<\alpha v_{ij}$ and $w_{j\mcal(j)}<\alpha w_{ji}$, thus it is enough to show that the valuation of an element in a tie group times $\alpha$ is greater than the valuation of any element in a following tie group. 
The most restrictive case happens when each tie group contains exactly one agent, in which case the largest ratio arises between the first and second agent in the preference list. More formally, let $2\le x \le n$, then $\frac{v_{i\mcal(i)}}{v_{ij}} \leq \frac{x-1}{x} \leq \frac{n-1}{n} < \alpha$, which indeed gives $v_{i\mcal(i)}<\alpha v_{ij}$. The case of $w_{j\mcal(j)}<\alpha w_{ji}$ follows analogously.

Take $K\in \lbr 1, 2n(n+1)\rbr$, and set $\omega=2n(n+1)-K$. 
We now show that there exists an $\alpha$-stable matching $\mcal$ for $\pg$ with $SW(\mcal)\geq \omega$ if and only if there exists a stable matching $\mcal'$ for $I$ with $w(\mcal')\leq K$. 
Suppose that $\mcal$ is an $\alpha$-stable matching for $\pg$ with $SW(\mcal)\geq \omega$. 
It holds that
\begin{align*}
SW(\mcal) &= \! \!  \sum_{(i,j)\in \mcal} \! \! (v_{ij}+w_{ji}) = \! \! \sum_{(i,j)\in \mcal} \! \! (-c_{ij}-c_{ji}+2n+2)
=-w(\mcal)+2n(n+1),
\end{align*}
where the last equality is due to $\mcal$ being perfect (since preference orders are complete). Thus, $w(\mcal)\leq 2n(n+1)-\omega=K$. 
In particular, $\mcal$ is a stable matching for $I$ with $w(\mcal)\leq K$. 
Analogously, suppose that $\mcal$ is a stable matching for $I$ with $w(\mcal) \le K$. 
It holds that
\begin{align*}
w(\mcal) &= \! \! \sum_{(i,j)\in \mcal} \! \! (c_{ij} + c_{ji}) = \! \! \sum_{(i,j)\in \mcal} \! \!  (-v_{ij}-w_{ji}+2n+2)
= - SW(\mcal) +2n(n+1),
\end{align*}
and so $SW(\mcal) \geq 2n(n+1) - K = \omega$. Thus $\mcal$ is an $\alpha$-stable matching for $\pg$ with $SW(\mcal) \geq \omega$. 
\end{proof}

Note that we prove that \textsc{Max-Weight-$\alpha$-SMCT} is NP-complete for $\alpha \in (\frac{n-1}{n},1)$. 
As mentioned above, the problem is solvable in polynomial time when $\alpha = 1$, and when $\alpha \le \frac{\mu}{\mu+1}$ by Theorem \ref{th:boostingAlg}. This reveals an interesting threshold phenomenon. 
Note that $\mu$ is determined by the valuations defined in the reduction and can be as small as $\nicefrac{1}{n}$; the proof can be adapted to larger values of $\mu < 1$ by altering the way the valuations are defined.
Clearly, Theorem~\ref{thm:maxweight_NPhard} implies NP-hardness for more general optimization versions such as $\alpha$-SMCTI.

\section{Conclusion and Future Work}

We study the interplay between stability and social welfare in two-sided matching markets with cardinal preferences, ties, and incomplete preference lists. By adopting the relaxed notion of $\alpha$-stability, we provide a complete characterization of the stability--efficiency tradeoff under asymmetric valuations and show that relaxing stability can substantially improve achievable efficiency guarantees. We derive a polynomial-time algorithm that computes an $\alpha$-stable matching attaining the best possible efficiency guarantee. Finally, our hardness results show that computing optimal $\alpha$-stable matchings becomes intractable even under slight relaxations of stability, i.e., for $\alpha < 1$.

Our work leaves several interesting questions for future research.
A natural direction is to further refine the complexity landscape of $\alpha$-stable matchings. In particular, it would be interesting to determine whether computing an optimal $\alpha$-stable matching is NP-hard for all $\alpha \in \bigl(\nicefrac{\mu}{\mu+1}, 1\bigr)$, and to understand the approximability of optimal $\alpha$-stable matchings within this range. Depending on these hardness results, a natural next step is to leverage predictions to guide the computation of near-optimal $\alpha$-stable matchings; notably, our algorithm already draws inspiration from this paradigm. Other promising directions include extending the framework to many-to-one matching markets and exploring hybrid notions of stability that incorporate different behavioral assumptions.


\bibliography{references}

\end{document}